\newtheorem{TT}{Theorem}
\newtheorem{DD}{Definition}
\newtheorem{REM}{Remark}
\DeclareMathOperator{\Tr}{Tr}
\newcommand{\vect}[1]{\boldsymbol{#1}}
\journal{}
\begin{document}

\begin{frontmatter}

\newcommand{\TITLE}{Langevin method for a continuous stochastic car-following model and its stability conditions}
\newcommand{\AUTHORS}{}%
\newcommand{\AFFILIATIONS}{.}
\newcommand{\SUPPORTS}{}

\author[addr1]{D. Ngoduy\corref{cor1}}
\author[addr1]{S. Lee}
\author[addr2]{M. Treiber}
\author[addr1]{M. Keyvan-Ekbatani}
\author[addr3]{H.L. Vu}
\address[addr1]{Connected Traffic Systems Lab, Dept. Civil and Natural Resources Eng., University of Canterbury, NZ}
\cortext[cor1]{Corresponding author. email:dong.ngoduy@canterbury.ac.nz}
\address[addr2]{Institute for Transport \& Economics, Technical University of Dresden, Germany}
\address[addr3]{Institute of Transport Studies, Monash University, Australia}



\title{\TITLE}


%
%
%
%
%
%

\begin{abstract}
In car-following models, the driver reacts according to his physical and psychological abilities which may change over time.  However, most car-following models are deterministic and do not capture the stochastic nature of human perception. It is expected that purely deterministic traffic models may produce unrealistic results due to the stochastic driving behaviors of drivers. This paper is devoted to the development of a distinct car-following model where a stochastic process is adopted to describe the time-varying random acceleration which essentially reflects the random individual perception of driver behavior with respect to the leading vehicle over time. In particular, we apply coupled Langevin equations to model complex human driver behavior. In the proposed model, an extended Cox-Ingersoll-Ross (CIR) stochastic process will be used to describe the stochastic speed of the follower in response to the stimulus of the leader. An important property of the extended CIR process is to enhance the non-negative properties of the stochastic traffic variables (e.g. non-negative speed) for any arbitrary model parameters. Based on stochastic process theories, we derive stochastic linear stability conditions which, for the first time, theoretically capture the effect of the random parameter on traffic instabilities. Our stability results conform to the empirical results that the traffic instability is related to the stochastic nature of traffic flow at the low speed conditions, even when traffic is deemed to be stable from deterministic models.
		

\end{abstract}

\begin{keyword}

stochastic traffic flow \sep stochastic process \sep car-following \sep optimal velocity model (OVM)\sep Langevin equations\end{keyword}

\end{frontmatter}

\section{Introduction}
\label{Introduction}

To date, there have been impressive advances in modeling the traffic-flow dynamics at both microscopic and macroscopic level. However, a majority of existing models are deterministic and fail to describe the existing uncertainty in human perception and driving behavior. These uncertainties are, among many factors, affecting the formation and propagation of stop-and-go waves \citep{Yeo2009}. Recent studies have shown that the stochasticity is a result of many features observed in real-life traffic. For example, the concave and stochastic growth patterns of oscillations are not captured by most deterministic models \citep{Tian2016b}. Emerging technologies in data collection allow us to study the true dynamics of car-following patterns which are not observed by in-situ observations of traffic flow on road. Extensive experiment results in \citet{Jiang2018} have indicated that the interplay between stochastic factors and speed adaptation is vital in the formation and evolution of oscillations. The authors have argued that the traffic instability might be determined by the competition between stochastic factors and the speed adaptation effect. This leads some implications to traffic flow theory: as deterministic models are not able to reproduce such instabilities, many efforts have been taken to develop stochastic models to capture the random human behavior over time. 

At the macroscopic level, which describes the dynamics of traffic flow at an aggregated level of the detail, the stochastic properties of traffic flow have been captured well via the modified fundamental diagrams (i.e. stochastic flow-density relationships). For example, \citet{Sumalee2011} have added random noises in the model parameters of a Stochastic Cell Transmission Model (SCTM) while \citet{Zhong2013} have extended the SCTM for the traffic networks with uncertainties in supply and demand. \citet{Ngoduy2011} has adopted a stochastic fundamental diagram in a multi-class LWR model by adding a random noise in the capacity using Gama distribution. \citet{Li2012} have also considered a stochastic fundamental diagram by adding local noises to the model parameters. \citet{Torduex2014} used a distinct stochastic jump process for the LWR model. Basically, these methods deal with the uncertainties numerically by adding noise with known probability distributions directly to the discrete model. Recently, \citet{Laval2013} have proposed a novel method to enable an analytical stochastic solution of a  class of stochastic LWR models with both stochastic initial conditions and stochastic fundamental diagram. \citet{Jabari2012,Jabari2013} have considered the source of randomness in the LWR model by the uncertainty inherent in a driver gap choice, which is represented by random state dependent vehicle time headway. In this model, the problem of negative sample paths of the stochastic variables is well tackled.  

Microscopic models, on the other hand, describe the dynamics of traffic flow at a high level of the detail, such as the movement of individual vehicles longitudinally and laterally, i.e. car-following models and lane-changing models, respectively. Car-following models have been used widely to study the reaction of the driver with respect to his/her neighboring vehicles including the individual speed and acceleration/deceleration (see details in the book of \citet{Treiber2013}). In this microscopic approach, \citet{Laval2014} proposed a stochastic desired acceleration model which was extended from the car-following of \citet{Newell2002} by adding white noise to the driver's desired acceleration. It has been reported that this model can replicate traffic oscillations. \citet{Treiber2006} argued that the desired time gap in a car-following model (i.e. the Intelligent Driver Model-IDM) is traffic state dependent, e.g.  it is increased after a long time in congested traffic. The authors have modeled the desired time gap as a dynamical function of the speed variance and then replaced it in the IDM to replicate many interesting traffic phenomena: widely scattered flows, capacity drop, etc. Based on Newell$'$s car-following model \citep{Newell2002} , \citet{Jabari2014} have derived a (first order) macroscopic traffic model with a probabilistic fundamental diagram. A Lagrangian version of such model has been proposed in \citet{Jabari2018stochastic}, which represent the uncertain choice of the follower's free-speed, reaction times and safe distance with respect to the leader. An integrated recurrent neural network and the IDM has been proposed to predict traffic oscillations by \citet{Zhou2017}.  \citet{Tian2016a} have attempted to improve the IDM by considering the difference between the driving behaviors at high-speed and low-speed. More specifically, the desired time gap is defined as a discrete function as below: 
\begin{equation}
    T(t+\Delta t)=
\begin{cases}
    T_1+rT_2,& \text{with probability } p\\
    T(t)& \text{otherwise } 
\end{cases}
\label{randomgap}
\end{equation}
where $r$ is the uniformly distributed random number between 0 and 1. $T_1$ and $T_2$ are constants indicating the range of time gap variations, typically $T_1<T_2$. $\Delta t$ is a simulation time step. The authors have shown that the proposed discrete IDM with the randomized desired time gap in equation (\ref{randomgap})  can replicate the synchronized traffic flow patterns. Latter on this improved model has been used by \citet{Tian2016b} to simulate a concave growth of traffic oscillations. 

In general, it has been shown that allowing the desired time-headway to change over time (e.g. \citet{Tian2016a}) or adding noise to the continuous driver's desired acceleration (e.g. \citet{Laval2014}) have led to better model prediction. Along with this line, we aim to extend the model of \citet{Laval2014} to the form of coupled continuous stochastic differential equations which essentially captures the time-varying random choice of the driver's acceleration. The continuous form of the proposed model allows a theoretical insight into the effect of stochasticity on the stability of the traffic flow. This would help reveal quantitative relationships between car-following model structures and traffic oscillations due to uncertainties and stochasticity that are inherent from human car-following behavior. To the best of our knowledge, the stochastic stability analysis of a car-following model has only been conducted by \citet{Treiber2017}, which only focuses on two consecutive vehicles of a specific car-following law adapted from an acceleration-based model and only provides analytical results for a sub-critical regime. It is found that stochasticity, in fact, adds nothing different to linear stability. However, a more general method and theoretical insights into the instabilities through a stream of vehicles (e.g. string stability) are still missing for general stochastic car-following models.  Our approach advances the model of \citet{Laval2014} and \citet{Treiber2017} as following: 
\begin{itemize}
    \item We apply coupled Langevin equations to model the complex human driver behavior in aggregate lane car-following models which have a tractable mathematical foundation. In our model, an extended Cox-Ingersoll-Ross (CIR) stochastic process will be used to describe the stochastic human perception.  An important property of the extended CIR stochastic process is to enhance the non-negative properties of the stochastic traffic variables (e.g. individual-speed) for any arbitrary model parameters, which is not always the case in the model of \citet{Laval2014,Yuan2018}. 
    \item We derive general (string) stochastic linear stability conditions using the proposed model. These conditions are a first attempt to show analytically how the stochasticity affects the instabilities of traffic flow. More specifically, the derived conditions are able to describe the speed variations of the followers according to certain random human behaviors given the leading vehicle's speed. This paper thus fills the methodological gap of linear stability analysis for stochastic car-following models. The results complement the traditional stability conditions for deterministic car-following models, which have been used widely in the literature. More especially, our analytical results conform to the empirical results of the traffic instability in \citet{Jiang2018}. Our model shows that stochasticity does affect the stability of traffic flow at low-speed whereas the deterministic counterpart of the model shows a stable condition (i.e. Figures \ref{fig:sovm:stab:1} - \ref{Fig:traj}).
    \item The proposed model is verified against real-life traffic data (e.g. NGSIM trajectory data). The calibrated results show a good model performance. 
\end{itemize}


The rest of this paper is organized as follows. Section \ref{model} describes the model formulation and discusses some important properties of the proposed model. We derive the stochastic linear stability conditions of the proposed model in Section \ref{stabcondition}. Section \ref{Sec.Simulation} describes some numerical results supporting the stochastic stability conditions found in Section \ref{stabcondition} and verifies the performance of the proposed model using NGSIM data. Finally, we conclude the paper in Section \ref{Sec.Conclusion}.
\section*{Notation}
\begin{tabular}{ll}
  \textbf{Index}  &  \\
  $t$   & Time instant ($s$) \\
  \textbf{Model variables}  & \\
  $x_n$ & Location of vehicle $n$ ($m$) \\
  $v_n$ & Speed of vehicle $n$ ($m/s$) \\
  $\Delta v_n$ & Relative speed of vehicle $n$ and its leader ($n-1$) ($m/s$) \\
  $s_n$ & Bumper-to-bumper space gap between vehicle $n$ and its leader ($n-1$) ($m$)\\
  \textbf{Model parameters} & \\
  $v_0$ & Desired speed ($m/s$) \\
  $s_c$ & Critical headway ($m$)\\
  $\alpha$ & Dimensionless constant coefficient\\
  $\beta$ & Reaction coefficient ($1/s$)\\
  $\sigma_0$ & Dissipation coefficient ($\sqrt{m}/s$ for our model)
\end{tabular}

\section{Model formulation\label{model}}
The use of multiplicative Gaussian white noise to describe the the acceleration/deceleration of the follower with respect to the leader in a car-following model has been justified in \citet{Laval2014}. This was based on authors' collected data  containing the position, speed, acceleration and altitude information. The data was collected during the acceleration process at signalized intersections, and corresponds to instances where the vehicle was the leader of the platoon.

This paper adopts the Langevin equations to describe the stochastic driving behavior of the car-following models. In principle, the Langevin equations are used to illustrate the stochastic process in physics, which is an effective method for the modelling of quasi-continuous diffusion processes \citep{Manke2009}. This formalism has been used in a wide range of problems of Brownian motions, economics and financial mathematics, chemical reactions, and diverse optimization in the last decades. Moreover, a random characteristic of the variance in the modelling of stochastic dynamics has been actively used in financial modelling. This provides the possibility of considering efficiently unpredictable and uncontrolled effects of exogenous variables, called as a stochastic volatility.

To apply such Langevin method to our traffic problems, a generic stochastic equation for the acceleration of a vehicle $n$ includes a deterministic part and a Langevin source.
\begin{equation}
dv_n(t)=f\left(v_n(t), s_n(t),\Delta v_n(t)\right)dt + g\left(v_n(t), s_n(t),\eta(t)\right)dt\label{stoc}
\end{equation}
where $f(.)$ denotes a deterministic function depending on the state
of the considered vehicle and the leading vehicle while $g(.)$ denotes
the stochastic source which depends on only the state of the
considered vehicle and the stochastic process $\eta(t)$. When $g(.)=0$
we obtain a deterministic car-following model governed by a specific
definition of the function $f(.)$ such as the OVM \citep{Bando95},
FVDM \citep{Jiang2002} or IDM \citep{Treiber2005}. Without loss of
generality, in this paper, we use the OVM as a specific definition of
the function $f(.)$. In fact,  we have chosen the OVM as specific
underlying model since it is best suited for analytic
investigations. The actual contribution is the stochastic part which
can be applied to a wide class of deterministic car-following models,
e.g., for the IDM which will not lead to crashes, even in the linearly
unstable regime. To avoid a negative gaps in the numerical
implementation due to the choice of OVM, we set any negative gap to a
small positive value. The function $f(.)$ thus reads: 

\begin{equation}
f\left(v_n(t),s_n(t),\Delta v_n(t)\right)=\beta\left[V_{op}(s_n(t))-v_n(t)\right]\label{OVM}
\end{equation}
where $V_{op}(.)$ denotes the headway-dependent optimal speed. In this paper we adopt the following functional optimal speed:
\begin{equation}
V_{op}(s)=\frac{v_{0}}{2}\left[\tanh{\left(\frac{s}{s_c}-\alpha\right)}+\tanh{\alpha}\right]\end{equation}

To take into account the stochastic part $g(.)$, the multiplicative Gaussian white noise is used so that the model equation (\ref{stoc}) reads:
\begin{equation}
dv_n(t)=\beta\left[V_{op}(s_n(t))-v_n(t)\right]dt + \sigma\left(v_n(t)\right)dW_n(t)\label{SOVM}
\end{equation}
where $W_n(t)$ follows a Wiener process modelling the random deviations from the mean speed of the individual vehicle. $\sigma(v_n)$ is a positive speed dependent dissipation parameter describing the noise strength.  Higher $\sigma(v_n)$ implies more randomness in the acceleration of the follower.


In \citet{Laval2014}, $V_{op}(s_n(t))=v_c$ and $\sigma\left(v_n\right)=\sigma_0$ where $v_c$ is a target speed (the constant value measured from data) and $\sigma_0$ is a constant. This linear model is essentially an Ornstein-Uhlenbeck (OU) process \citep{Uhlenbeck1930} in which the solution for $v_n(t)$ (can be found in any stochastic differential equations textbook, e.g. \citet{Mao2008,Gardiner2009,Evans2014}, 
has the Gaussian distribution with asymptotic mean and variance:
\begin{eqnarray}
E\left[v_n(t)\right] &=& v_0e^{-\beta t} + v_c\left(1-e^{-\beta t}\right)\\
Var\left[v_n(t)\right] &=& \frac{\sigma_0^2}{2\beta}\left(1-e^{-2\beta t}\right)
\end{eqnarray}
It is obvious that at the equilibrium state:
\begin{equation}
\lim\limits_{t\rightarrow \infty} E\left[v_n(t)\right] = v_c;\hspace{0.5cm}\lim\limits_{t\rightarrow \infty} Var\left[v_n(t)\right] = \frac{\sigma_0^2}{2\beta}
\end{equation}
Therefore, at the equilibrium state the speed variance depends only on the model parameters (i.e.$\sigma_0$ and $\beta$).

The model is simply linear but can replicate well the data observations as described in \citet{Laval2014}. However, a problem in the above model type is that when the speed is low or the constant value of $\sigma_0$ is high (above the pre-determined upper bound) we could obtain negative values of speed. This is not avoidable due to the nature of the model although ones can set the speed to zero in the numerical implementation if it is negative.  

A revised version of   \citet{Laval2014}, developed by \citet{Yuan2018}, accounts for the speed-dependence of the acceleration variance:
\begin{equation}
dv_n(t)=\beta\left(v_c-v_n(t)\right)dt + \sigma_0\left(v_c-v_n(t)\right)dW_n(t)\label{SOVM_rev}
\end{equation}
The solution for $v_n(t)$ of this model follows log-normal distribution with  asymptotic mean and variance:
\begin{eqnarray}
E\left[v_n(t)\right] &=& v_c -  \left(v_c-v_0\right)e^{-\beta t}\\
Var\left[v_n(t)\right] &=& \left(v_c-v_0\right)^2 \left(e^{-(2\beta-\sigma_0^2) t}-e^{-2\beta t}\right)
\end{eqnarray}
Nevertheless, in the above models, the optimal speed of the deterministic part (i.e. drift term) is fixed to a constant or free-speed value, which limits the model performance in many cases. More specially, the above models cannot show how the stochasticity affects the instabilities in a range of traffic situations. To this end,  we propose an extended and more generalized model which:  
\begin{itemize}
    \item relaxes the assumption of the constant dissipation parameter,
    \item uses a state-dependent optimal speed, 
    \item enhances the non-negative properties of the speed for arbitrary values of the model parameters, and
    \item allows the derivation of generic string stochastic stability conditions of traffic flow. 
\end{itemize}
 To this end, we propose an extended Cox-Ingersoll-Ross (CIR) process \citep{Cox1985} to model the acceleration deviations:
\begin{equation}
dv_n(t)=\beta\left[V_{op}(s_n(t))-v_n(t)\right]dt + \sigma_0\sqrt{v_n(t)}dW_n(t)\label{SOVM2}
\end{equation}

An important and distinct property of the extended CIR process over the OU process lies in the variable-dependent standard deviation factor $\sigma_0\sqrt{v_n(t)}$, which attempts to avoid the negative trajectories of the stochastic variable $v_n(t)$ for any arbitrary values of  $\sigma_0$. More generally, when $v_n(t)$ is close to zero, the standard deviation $\sigma_0\sqrt{v_n(t)}$ becomes very small even for high values of $\sigma_0$, which reduces the effect of the random oscillation on the speed. Consequently, when the speed is close to zero, its evolution becomes dominated by the drift term (i.e. the deterministic part), which pushes the speed towards the (positive) value $V_{op}$. It is worth mentioning here that in the OVM, when traffic becomes (linearly) unstable, the speed can even be negative if the optimal speed function $V_{op}$ is not positively defined. To this end, we impose the following constraint in the numerical implementation: $V_{op}^{+}=max(0,V_{op})$. This will guarantee that the stochastic speed is pushed towards the non-negative value in the low speed regime. In the following sections, we will show that the effect of the stochastic part (i.e. the standard deviation of the acceleration) on traffic stability vanishes in the free-flow traffic regime.

\begin{REM}
If we set $V_{op}(s_n(t))=v_c$ as in the case of the model of \citet{Laval2014}, equation (\ref{SOVM2}) follows a standard CIR process, from which the solution for $v_n(t)$ has the Gaussian distribution with asymptotic mean and variance: 
\begin{eqnarray}
E\left[v_n(t)\right] &=& v_0e^{-\beta t} + v_c\left(1-e^{-\beta t}\right)\\
Var\left[v_n(t)\right] &=& \frac{v_0\sigma_0^2}{\beta}\left(e^{-\beta t}-e^{-2\beta t}\right)+\frac{v_c\sigma_0^2}{2\beta}\left(1-e^{-\beta t}\right)^2
\end{eqnarray}
and at the equilibrium state:
\begin{equation}
\lim\limits_{t\rightarrow \infty} E\left[v_n(t)\right] = v_c;\hspace{0.5cm}\lim\limits_{t\rightarrow \infty} Var\left[v_n(t)\right] = \frac{v_c\sigma_0^2}{2\beta}
\end{equation}
\end{REM}
In the CIR process, at the equilibrium state, unlike the model of \citet{Laval2014}, the speed variance also depends on the critical speed $v_c$. Note that, due to the contribution of $\sqrt{v_n(t)}$ in the stochastic part of the CIR-like model, the unit of $\sigma_0$ is different between our model and the model of \citet{Laval2014}.

\section{Linear stochastic stability\label{stabcondition}}
In order to understand the stability of the model in response to the multiplicative white noise (i.e. the Brownian motions), let us adopt the linear analysis method. In principle, the linear stability method has been used widely in traffic flow literature to derive the conditions influencing
the long-wavelength instabilities of traffic flow. We will adopt such method to our stochastic car-following model equation (\ref{SOVM2}). The stochastic model of \citet{Treiber2017} shows that stochasticity does not affect the linear stability of traffic flow. Whereas empirical results in \citet{Jiang2018} indicated that the stochastic nature of drivers does destabilize traffic flow at low speed, where the deterministic model shows a stable traffic pattern. This section describes a first attempt to derive the general linear stochastic stability condition of stochastic car-following models, which is consistent with the empirical findings in \citet{Jiang2018}.

More specifically, in the stationary situation, the speed of the
considered vehicle is given by $v_n=v_{n-1}=...=v_e$ and the gap is
given by $s_n=s_{n-1}=...=s_e$ which is calculated from a mean fundamental diagram $v_e=V_{op}(s_e)$. 
Now let $\delta v_n$ and  $\delta s_n$ denote the small deviation of
the speed and gap around the stationary situation: $v_{n} =
v_e+\delta v_{n}$ and $s_{n} = s_e+\delta s_{n}$.  First order Taylor
expansion of equation (\ref{SOVM2}) leads to a linear stochastically perturbed evolution equation:
\begin{equation}
d \delta v_n(t)=-\beta\delta v_n(t) dt+  \beta V_{op}^{'}\delta s_n(t) dt
+ \frac{\sigma_0}{2\sqrt{v_e}} \delta v_n(t) dW(t),\label{eq_microequil}
\end{equation}
Note, by definition, we have:
\begin{equation} 
d \delta s_n(t)=\left(\delta v_{n-1}(t)-\delta v_{n}(t)\right)dt\label{eq_microequil2}
\end{equation}

\subsection{Local stability condition}
In this section, we will study the evolution of the gap and speed deviation, i.e. $\delta s_n$ and $\delta v_n$, given model parameters and dissipation parameters via a local stability analysis. In principle, the local stability describes how a driver recovers from a small disturbance in real-life traffic and returns to his/her steady state over time. That is the local stability is guaranteed if the gap and speed fluctuations of the followers decrease over time, or at least, do not amplify. To derive the local stability conditions, we adopt the Lyapunov stability theory, which makes use of a Lyapunov function $V(\vect{U})$. If we can find a non-negative function $V(\vect{U})$ that always decreases along trajectories of the system, a locally stable equilibrium state can be found. 

Note for the local stability analysis, the leader's speed is fixed, and generally taken as a constant value so we set $\delta v_{n-1}=0$, and we consider the stability of the follower w.r.t. the leading vehicle behavior (that is why it is called local stability). Thus mathematically system of equations (\ref{eq_microequil}) and (\ref{eq_microequil2}) can be written as the following linear stochastic ordinary differential equations (SODEs):\\

\begin{equation}
d\vect{U}(t)=\vect{F}\vect{U}(t)dt + \vect{G}\vect{U}(t)d\vect{W}(t)\label{Slinear1}
\end{equation}
where
\begin{equation} 
\vect{U} =
\kbordermatrix{
	& \\
	& \delta s_n  \\
	& \delta v_n  \\
}
,\hspace{0.5cm}
\vect{F} =
\kbordermatrix{
	& &\\
	& 0 & -1  \\
	& \beta V_{op}^{'}& -\beta  \\
}
,\hspace{0.5cm}
\vect{G} =
\kbordermatrix{
	& &\\
	& 0 & 0  \\
	& 0 & \gamma  \\
}
\end{equation}
with $\gamma=\frac{\sigma_0}{2\sqrt{v_e}}$.

To follow the standard derivation of the Lyapunov stability condition in \citet{Mao2008}, we have the following definition:

\begin{DD}
Let $\Phi$ denote the family of all continuous non-decreasing functions $\phi$  such that $\phi(0)=0$ and $\phi(r)>0$ if $r>0$. The trivial solution of the stochastic system (\ref{Slinear1}) is said to be locally stochastic stable if there exist a Lyapunov function $V(\vect{U}(t))$ satisfying:
\begin{eqnarray}
        &&V(\vect{U}(t))\geq \phi(|\vect{U}(t)|), \hspace{.5cm} \phi \in\Phi  \hspace{.5cm}\text{(i.e. positive definite)}\label{eq_L2}\\
    &&LV(\vect{U}(t))=  V_1(\vect{U}(t))\vect{F}\vect{U}(t) +\frac{1}{2}\Tr\left[(G\vect{U}(t))^T V_2(\vect{U})G\vect{U}(t)\right]\leq 0 \hspace{.5cm}\text{(i.e. negative definite)}  \label{eq_L3}
\end{eqnarray}
where
\begin{eqnarray*}
V_1(\vect{U}) = \kbordermatrix{
	& \\
	& \frac{\partial V}{\partial\delta s_n}  \\
	& \frac{\partial V}{\partial\delta v_n}  \\
},\hspace{0.5cm} V_2(\vect{U}) = \kbordermatrix{
	& &\\
	& \frac{\partial ^2 V}{\partial\delta s_n^2} &  \frac{\partial^2 V}{\partial\delta s_n\partial\delta v_n}\\
	& \frac{\partial ^2V}{\partial\delta v_n\partial\delta s_n} &  \frac{\partial^2 V}{\partial\delta v_n^2}\\
}
\end{eqnarray*}
\end{DD}



Then we have the following theorem:
\begin{TT}
The stochastic car-following model (\ref{SOVM2}) is locally  stable if :
\begin{equation}
    \sigma_0^2 \leq 8\beta v_e
\end{equation}

\end{TT}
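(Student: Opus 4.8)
The plan is to build an explicit quadratic Lyapunov function for the linear SODE~(\ref{Slinear1}) and check the two conditions of Definition~1. Take $V(\vect{U}) = \vect{U}^{T}\vect{P}\vect{U}$ with $\vect{P}$ a symmetric positive definite $2\times 2$ matrix to be chosen. Then $V_{1}(\vect{U}) = 2\vect{P}\vect{U}$ and $V_{2}(\vect{U}) = 2\vect{P}$, so substituting into the operator~(\ref{eq_L3}) and using $\vect{U}^{T}A\vect{U}=\tfrac12\vect{U}^{T}(A+A^{T})\vect{U}$ one finds that the generator collapses to a single quadratic form,
\begin{equation}
LV(\vect{U}) = \vect{U}^{T}\bigl(\vect{P}\vect{F}+\vect{F}^{T}\vect{P}+\vect{G}^{T}\vect{P}\vect{G}\bigr)\vect{U}.
\label{eq_pf_LV}
\end{equation}
Hence everything reduces to finding $\vect{P}\succ 0$ making the symmetric matrix $\vect{M}:=\vect{P}\vect{F}+\vect{F}^{T}\vect{P}+\vect{G}^{T}\vect{P}\vect{G}$ negative semidefinite; condition~(\ref{eq_L2}) then comes for free, since a positive definite quadratic form obeys $V(\vect{U})\ge\lambda_{\min}(\vect{P})\,|\vect{U}|^{2}=:\phi(|\vect{U}|)$ with $\phi\in\Phi$.

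\noindent\textbf{Choice of $\vect{P}$.} I would first try a diagonal $\vect{P}=\mathrm{diag}(p_{1},p_{2})$ with $p_{1},p_{2}>0$. Because the optimal-speed function is strictly increasing, $V_{op}^{'}=V_{op}^{'}(s_{e})>0$, and a short computation with $\vect{F}$, $\vect{G}$ from~(\ref{Slinear1}) gives
\begin{equation}
\vect{M} = \begin{pmatrix} 0 & -p_{1}+p_{2}\beta V_{op}^{'} \\ -p_{1}+p_{2}\beta V_{op}^{'} & p_{2}\,(\gamma^{2}-2\beta) \end{pmatrix},\qquad \gamma=\frac{\sigma_{0}}{2\sqrt{v_{e}}}.
\label{eq_pf_M}
\end{equation}
A symmetric $2\times 2$ matrix with vanishing $(1,1)$ entry is negative semidefinite only if its off-diagonal entry also vanishes, which forces $p_{1}=\beta V_{op}^{'}p_{2}$. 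Taking $p_{2}=1$ yields the candidate $V(\vect{U})=\beta V_{op}^{'}\,\delta s_{n}^{2}+\delta v_{n}^{2}$, positive definite since $\beta V_{op}^{'}>0$; with it the sign-indefinite $\delta s_{n}\delta v_{n}$ drift terms cancel exactly and
\begin{equation}
LV(\vect{U}) = (\gamma^{2}-2\beta)\,\delta v_{n}^{2} = \frac{1}{4v_{e}}\bigl(\sigma_{0}^{2}-8\beta v_{e}\bigr)\,\delta v_{n}^{2}.
\label{eq_pf_fin}
\end{equation}

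\noindent\textbf{Conclusion and the main obstacle.} From~\eqref{eq_pf_fin}, $LV(\vect{U})\le 0$ for all $\vect{U}$ exactly when $\sigma_{0}^{2}\le 8\beta v_{e}$, so under this bound both requirements of Definition~1 are met and the trivial solution of~(\ref{Slinear1}) --- equivalently the stochastic OVM~(\ref{SOVM2}) linearised about equilibrium --- is locally stochastically stable. The only genuinely delicate step is guessing the right $\vect{P}$: the weight $\beta V_{op}^{'}$ on $\delta s_{n}^{2}$ is precisely what annihilates the indefinite $\delta s_{n}\delta v_{n}$ terms, and, as in the classical deterministic Lyapunov argument for the OVM, this relies on $V_{op}^{'}>0$ at the equilibrium gap $s_{e}$. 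It is worth noting that the resulting $LV$ is only negative \emph{semi}definite --- it vanishes on $\{\delta v_{n}=0\}$ --- so the argument delivers stability in the Lyapunov sense of Definition~1 rather than asymptotic ($L^{2}$ or almost-sure) stability; strengthening the conclusion would need either an off-diagonal entry $p_{12}\neq 0$ in $\vect{P}$ (which tightens the admissible range of $\sigma_{0}$ and reintroduces the deterministic constraint linking $\beta$ and $V_{op}^{'}$) or a LaSalle-type invariance argument.
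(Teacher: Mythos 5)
Your proposal is correct and arrives at exactly the paper's Lyapunov function $V(\vect{U})=\beta V_{op}^{'}\delta s_n^2+\delta v_n^2$ and the same computation $LV=\bigl(\tfrac{\sigma_0^2}{4v_e}-2\beta\bigr)\delta v_n^2$, so it is essentially the same argument; you merely derive the choice of weights systematically (by forcing the off-diagonal of $\vect{P}\vect{F}+\vect{F}^{T}\vect{P}+\vect{G}^{T}\vect{P}\vect{G}$ to vanish) rather than positing $\vect{Q}$ outright. Your closing observation that $LV$ is only negative semidefinite is accurate and applies equally to the paper's proof.
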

\begin{proof}
For the system (\ref{Slinear1}), we define a Lyapunov function as:
\begin{equation}
V(\vect{U}(t)) =\vect{U}^T(t) \vect{Q} \vect{U}(t)     
\end{equation}
with 
\begin{equation}
\vect{Q}=
\kbordermatrix{
	& &\\
	& \beta V_{op}^{'} & 0  \\
	& 0 & 1  \\  
}	
\end{equation}
which leads to $V(\vect{U}(t)) = \beta V_{op}^{'}\delta s_n^2(t) +\delta v_n^2(t)$ and 
\begin{equation}
LV(\vect{U}(t)) = \left(-2\beta + \frac{\sigma_0^2}{4v_e}\right)\delta v_n^2(t)  
\end{equation}
It is clear that, $V(\vect{U}(t))\geq \lambda_{min}(\vect{Q})\parallel U(t)\parallel ^2$  where $\lambda_{min}(\vect{Q})$ denotes the smallest eigvenvalues of the matrix $\vect{Q}$. Therefore,  condition (\ref{eq_L2}) is satisfied.  To guarantee $LV(\vect{U}(t))\leq 0$, i.e. condition (\ref{eq_L3}) , we need:  $\sigma_0^2 \leq 8\beta v_e$
\end{proof}

When $\sigma_0 = 0$ (i.e. the original OVM), the system is locally stable. However, large noise clearly destabilizes the local stability of the system. 

\subsection{String stability condition}
The string stability considers how a small perturbation in the gap and speed of the leader affect the gap and speed of all the followers. The string stability of deterministic car-following models has been studied extensively in the literature ( see example, \citet{Ngoduy2013} and references there-in). The contribution of this section is to derive the linear stability condition taking into account the multiplicative white noisy terms. Let $\delta s_n=\mathcal{S}e^{i\omega n+ \lambda t}$ and $\delta v_n=\mathcal{M}e^{i\omega n+ \lambda t}$, where $\mathcal{S}$ and $\mathcal{M}$ are constants, then the system of equations (\ref{eq_microequil}) and (\ref{eq_microequil2}) is also written as the following linear SODEs:
\begin{equation}
d\vect{U}(t)=\vect{F}\vect{U}(t)dt + \vect{G}\vect{U}(t)d\vect{W}(t)\label{Slinear2}
\end{equation}
where $\vect{U}(t)$  and $\vect{G}$ are defined as above, while 
\begin{equation} 
\vect{F} =
\kbordermatrix{
	& &\\
	& 0 & e^{-i\omega}-1  \\
	& \beta V_{op}^{'}& -\beta  \\
}\label{stringF}
\end{equation}
In the literature of stochastic process, there is a wide spectrum of stochastic stability analysis for both continuous and discrete stochastic differential equations \citep{kushner1971}. In this paper we expand these ideas to our proposed stochastic car-following model (\ref{SOVM2}) while focusing on almost sure stability, moment (i.e. 2nd order) stability (which is also called mean square stability), and the relationship between them. To follow the text in Evans (2014), the following definitions are used for the stochastic stability analysis:
\begin{DD}
	The stochastic system is said to be almost surely linearly stochastically stable (i.e. asymptotically stochastically stable in the large) if  the solution of equation  (\ref{Slinear2}) has the following convergence property:
	\begin{equation}
	\lim\limits_{t\rightarrow \infty} \parallel \vect{U}(t)\parallel=0 \text{ with probability 1}\label{stab1}
	\end{equation}
\end{DD}
It has been shown in literature of SDEs (e.g. \citet{Mao2008,Gardiner2009,Evans2014}) that condition (\ref{stab1}) is equivalent to:
\begin{equation}
\mathfrak{R}\{\vect{F^*}\} = \mathfrak{R}\{\vect{F}-0.5\parallel\vect{G}\parallel^2\}\leq 0  \label{stoccond1}
\end{equation}
where $\mathfrak{R}\{\vect{F^*}\}$ denotes the real parts of the eigenvalues of the Jacobian matrix $\vect{F^*}=\vect{F}-0.5\parallel\vect{G}\parallel^2$, with $\vect{F}$ being defined in (\ref{stringF}). 
\begin{DD}
	The stochastic system is said to be mean square stable (i.e. asymptotically stochastically mean square stable) if  the solution of equation  (\ref{Slinear2}) has the following convergence property:
	\begin{equation}
	\lim\limits_{t\rightarrow \infty} E\left[\parallel \vect{U}(t)\parallel^2\right]=0 \label{stab2}
	\end{equation}
\end{DD}

To follow the literature of SDEs (e.g. \citet{Mao2008,Gardiner2009,Evans2014}),  condition (\ref{stab2}) is equivalent to:
\begin{equation}
\mathfrak{R}\{\vect{F}\}+0.5\parallel\vect{G}\parallel^2\leq 0 \label{stoccond2}
\end{equation}
Note that if $\vect{G}=0$ conditions (\ref{stoccond1}) and (\ref{stoccond2}) are equivalent to the linear stability condition of the deterministic OVM, i.e. $\mathfrak{R}\{\vect{F}\}\leq 0$.
Then we have the following theorem:
\begin{TT}
The stochastic car-following model (\ref{SOVM2}) is almost sure stable if:
\begin{equation}
 \sigma_0^2\leq 8v_e \left(\beta- \sqrt{2\beta V_{op}^{'}}\right)\label{cond1}
\end{equation}
Obviously, condition (\ref{cond1}) is stricter than the local stability condition.
\end{TT}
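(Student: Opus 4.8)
The plan is to mirror the local-stability argument of Theorem~1 but with the wavenumber-dependent drift matrix $\vect{F}$ of (\ref{stringF}) and with the almost-sure criterion (\ref{stoccond1}) in place of the Lyapunov computation. First I would recall that for the Fourier modes the reduced linear SODE is (\ref{Slinear2}) with $\vect{G}=\mathrm{diag}(0,\gamma)$ and $\gamma=\sigma_0/(2\sqrt{v_e})$, so that the stochastic correction appearing in (\ref{stoccond1}) is $\tfrac12\|\vect{G}\|^{2}=\gamma^{2}/2=\sigma_0^{2}/(8v_e)$. Because the multiplicative noise enters only through the speed component, this correction acts only on the damping entry of $\vect{F}$, so that (\ref{stoccond1}) becomes the requirement that the corrected matrix $\vect{F}^{*}$ --- structurally the deterministic OVM Jacobian with the reaction rate $\beta$ in the damping term replaced by the effective rate $\beta^{*}:=\beta-\sigma_0^{2}/(8v_e)$, the coupling entry $\beta V_{op}'$ being left unchanged --- have all its eigenvalues in the closed left half-plane for every $\omega$. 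Its characteristic polynomial is
\begin{equation*}
\mu^{2}+\beta^{*}\mu+\beta V_{op}'\bigl(1-e^{-i\omega}\bigr)=0 .
\end{equation*}

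Next I would run the classical long-wavelength string-stability analysis on this polynomial, exactly as for the deterministic OVM. Expand the marginal branch $\mu(0)=0$ as $\mu=c_{1}(i\omega)+c_{2}(i\omega)^{2}+O(\omega^{3})$, substitute $1-e^{-i\omega}=i\omega-\tfrac12(i\omega)^{2}+O(\omega^{3})$, and match powers of $i\omega$: the first order gives $c_{1}=-\beta V_{op}'/\beta^{*}$, and the second order gives $\beta^{*}c_{2}=\tfrac12\beta V_{op}'-(\beta V_{op}')^{2}/(\beta^{*})^{2}$, whence $\mathfrak{R}\{\mu(\omega)\}=-c_{2}\,\omega^{2}+O(\omega^{4})$. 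Hence long-wavelength disturbances decay iff $c_{2}\ge0$, i.e.\ iff $(\beta^{*})^{2}\ge 2\beta V_{op}'$. Using $\beta^{*}>0$ --- which is precisely the local-stability bound $\sigma_0^{2}\le 8\beta v_e$ of Theorem~1, and which must be checked to hold in the regime considered --- the inequality $(\beta^{*})^{2}\ge 2\beta V_{op}'$ is equivalent to $\beta-\sigma_0^{2}/(8v_e)\ge\sqrt{2\beta V_{op}'}$, which rearranges to $\sigma_0^{2}\le 8v_e\bigl(\beta-\sqrt{2\beta V_{op}'}\bigr)$; and since $\sqrt{2\beta V_{op}'}\ge0$ this bound never exceeds $8\beta v_e$, which is why (\ref{cond1}) is stricter than the local condition.

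The main obstacle I anticipate is not the algebra but the justification that this pointwise, leading-order computation captures the whole stability picture, which has two facets. First, one has to read criterion (\ref{stoccond1}) correctly for this structured diffusion matrix, i.e.\ confirm that the scalar-type It\^o correction $\tfrac12\|\vect{G}\|^{2}$ enters through the single damping entry of $\vect{F}$ and in the direction that weakens the deterministic damping, so that the noise is able to destabilize the platoon. Second, one must show that $(\beta^{*})^{2}\ge 2\beta V_{op}'$ keeps $\mathfrak{R}\{\mu(\omega)\}\le0$ for all $\omega\in[0,2\pi)$ and not only for small $\omega$. For the latter I would fall back on the standard OVM argument: setting $\mu=i\eta$ in the characteristic polynomial forces, for a purely imaginary root with $\eta\neq0$, both $\eta^{2}=\beta V_{op}'(1-\cos\omega)$ and $(\beta^{*})^{2}=\beta V_{op}'(1+\cos\omega)\le 2\beta V_{op}'$, so no root can reach the imaginary axis while $(\beta^{*})^{2}>2\beta V_{op}'$; since the non-trivial eigenvalue is $-\beta^{*}<0$ at $\omega=0$, continuity in $\omega$ then keeps every eigenvalue in the closed left half-plane, completing the proof.
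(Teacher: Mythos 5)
Your proposal matches the paper's own proof: both reduce criterion (\ref{stoccond1}) to the characteristic polynomial $\lambda^{2}+\beta^{*}\lambda+\beta V_{op}'\bigl(1-e^{-i\omega}\bigr)=0$ with $\beta^{*}=\beta-\sigma_0^{2}/(8v_e)$, perform the same long-wavelength power-series expansion in $\omega$, and obtain $(\beta^{*})^{2}\geq 2\beta V_{op}'$, which rearranges to (\ref{cond1}) and is visibly stricter than $\sigma_0^{2}\leq 8\beta v_e$. Your closing check that no eigenvalue crosses the imaginary axis for \emph{any} $\omega$ once $(\beta^{*})^{2}>2\beta V_{op}'$ is a rigor step the paper omits (it stops at the $\mathcal{O}(\omega^{2})$ expansion), and you correctly impose $\beta^{*}\geq 0$ where the paper's text has a sign slip (it writes that $\lambda_1\leq 0$ leads to $\beta^{*}\leq 0$ while then stating the equivalent condition $\sigma_0^{2}\leq 8\beta v_e$, which corresponds to $\beta^{*}\geq 0$).
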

\begin{proof}
It is straightforward to show that the eigenvalues of the Jacobian matrix $\vect{F^{*}}$ is the solutions of the following characteristic equation:
\begin{equation}
\lambda^2+\left(\beta-\frac{\sigma_0^2}{8v_e}\right)\lambda-\beta V_{op}^{'}\left(e^{-i\omega}-1\right)=0\label{eq_lineareqn1}
\end{equation}

Let us expand $\lambda$ in a power series
solution $\lambda=i\omega\lambda_1+\omega^2\lambda_2+...$ where
$\lambda_1$ and $\lambda_2$ are real coefficients.  Let us
substitute this expansion into equation (\ref{eq_lineareqn1}) and expand the exponential terms to the second order, and set the
first order ($\mathcal{O}(\omega)$) and the second order
($\mathcal{O}(\omega^2)$) terms to zero. After a rather lengthy but
straightforward algebraic calculation we have the following solutions:

\begin{equation}
\lambda_1=-\frac{\beta V_{op}^{'}}{\beta^*},
\end{equation}
and
\begin{equation}
\lambda_2=\frac{1}{\beta^{*}}\left(\lambda_1^2-\frac{\beta V_{op}^{'}}{2}\right)
\end{equation}
where $\beta^{*}=\beta-\frac{\sigma_0^2}{8v_e}$.\\
Condition (\ref{stoccond1}) holds if $\lambda_1 \leq 0$ and  $\lambda_2 \leq 0$. It is obvious that  $\lambda_1 \leq 0$ leads to $\beta^{*}\leq 0$. This is equivalent to $\sigma_0^2\leq 8\beta v_e$.  Whereas $\lambda_2 \leq 0$ leads to:
\begin{equation}
    \sigma_0^2\leq 8v_e \left(\beta- \sqrt{2\beta V_{op}^{'}}\right) 
\end{equation}
which also satisfies: $\sigma_0^2\leq 8\beta v_e$.
\end{proof}

\begin{TT}
The stochastic car-following model (\ref{SOVM2}) is mean square stable if:
\begin{equation}
 \sigma_0^2\leq \frac{4v_eV_{op}^{'}}{\beta} \left(\beta-2V_{op}^{'}\right)\label{cond2}
\end{equation}
\end{TT}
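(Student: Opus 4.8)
The plan is to re-use the template of the proof of Theorem 2, substituting the mean-square criterion (\ref{stoccond2}) for the almost-sure one. The structural point to keep in mind is that, for mean-square stability, the multiplicative noise does \emph{not} modify the Jacobian (as it did through $\vect{F}^{*}$ in Theorem 2); instead it contributes the additive term $\tfrac12\|\vect{G}\|^{2}=\tfrac{\sigma_0^{2}}{8v_e}$ to the growth rate of the second moment. This mirrors the scalar prototype $dx=ax\,dt+bx\,dW$, for which $\tfrac{d}{dt}E[x^{2}]=(2a+b^{2})E[x^{2}]$, so that second-moment stability requires $a+\tfrac12 b^{2}\le 0$; here the role of $a$ is played by the long-wavelength growth coefficient of the \emph{deterministic} string Jacobian $\vect{F}$ of (\ref{stringF}), and that of $b^{2}$ by $\|\vect{G}\|^{2}=\gamma^{2}=\sigma_0^{2}/(4v_e)$.

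Concretely, I would first write the characteristic equation of the unmodified $\vect{F}$ of (\ref{stringF}), namely $\lambda^{2}+\beta\lambda-\beta V_{op}^{'}(e^{-i\omega}-1)=0$, which is exactly (\ref{eq_lineareqn1}) with $\beta^{*}$ set equal to $\beta$ because no noise correction now sits inside $\mathfrak{R}\{\cdot\}$. Then I would insert the long-wavelength expansion $\lambda=i\omega\lambda_{1}+\omega^{2}\lambda_{2}+\dots$ together with $e^{-i\omega}-1=-i\omega-\tfrac{\omega^{2}}{2}+\dots$ and match orders, precisely as in the proof of Theorem 2. The $\mathcal{O}(\omega)$ balance gives $\lambda_{1}=-V_{op}^{'}$, which is real and (since $V_{op}^{'}\ge 0$) non-positive, so it affects only the propagation part of $\lambda$ and plays no role in the stability test; the $\mathcal{O}(\omega^{2})$ balance gives $\lambda_{2}=\tfrac{1}{\beta}\big(\lambda_{1}^{2}-\tfrac{\beta V_{op}^{'}}{2}\big)=\tfrac{1}{\beta}\big((V_{op}^{'})^{2}-\tfrac{\beta V_{op}^{'}}{2}\big)$, the deterministic OVM growth coefficient whose sign recovers the classical threshold $V_{op}^{'}\le\beta/2$.

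Because $i\omega\lambda_{1}$ is purely imaginary, the real part of $\lambda$ is governed at leading order by $\lambda_{2}$, so criterion (\ref{stoccond2}) reduces to $\lambda_{2}+\tfrac{\sigma_0^{2}}{8v_e}\le 0$, i.e. $\tfrac{(V_{op}^{'})^{2}}{\beta}-\tfrac{V_{op}^{'}}{2}+\tfrac{\sigma_0^{2}}{8v_e}\le 0$; multiplying through by $8v_e$ and rearranging yields $\sigma_0^{2}\le 4v_e V_{op}^{'}-\tfrac{8v_e (V_{op}^{'})^{2}}{\beta}=\tfrac{4v_e V_{op}^{'}}{\beta}\big(\beta-2V_{op}^{'}\big)$, which is (\ref{cond2}). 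As sanity checks I would note that $\sigma_0=0$ recovers the deterministic condition $V_{op}^{'}\le\beta/2$, and that (\ref{cond2}) is strictly tighter than the local-stability bound $\sigma_0^{2}\le 8\beta v_e$ whenever $0<V_{op}^{'}<\beta/2$. The point I expect to require the most care is not this algebra but pinning down how the noise enters the criterion — additively, on top of the \emph{deterministic} $\lambda_{2}$, rather than through the shifted $\beta^{*}=\beta-\sigma_0^{2}/(8v_e)$ used for Theorem 2 — since that distinction, together with the signs, is exactly what makes the two theorems give genuinely different thresholds; a minor caveat, shared with the deterministic and almost-sure analyses, is that the criterion is imposed in the long-wavelength limit, with the uniform $\omega\to 0$ mode treated as trivial.
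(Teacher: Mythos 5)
Your proposal is correct and follows essentially the same route as the paper's own proof: the characteristic equation of the unmodified $\vect{F}$, the long-wavelength expansion giving $\lambda_1=-V_{op}^{'}$ and $\lambda_2=\frac{1}{\beta}\bigl(\lambda_1^2-\frac{\beta V_{op}^{'}}{2}\bigr)$, and the criterion $\lambda_2+\tfrac12\gamma^2\leq 0$ rearranged to (\ref{cond2}). Your explicit remark that the noise enters additively here rather than through the shifted $\beta^{*}$ of Theorem 2 is exactly the distinction the paper relies on, just stated more carefully.
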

\begin{proof}
Following the proof in Theorem 2, the eigenvalues of the Jacobian matrix $\vect{F}$ is the solutions of the following characteristic equation:
\begin{equation}
\lambda^2+\beta\lambda-\beta V_{op}^{'}\left(e^{-i\omega}-1\right)=0\label{eq_lineareqn2}
\end{equation}
which lead to the following solutions:
\begin{equation}
\lambda_1=-V_{op}^{'}
\end{equation}
and
\begin{equation}
\lambda_2=\frac{1}{\beta}\left(\lambda_1^2-\frac{\beta V_{op}^{'}}{2}\right)
\end{equation}
Since $\lambda_1 \leq 0$, condition (\ref{stoccond2}) holds if and only if  $\lambda_2 + 0.5\gamma^2\leq 0$ which is equivalent to:
\begin{equation}
 \sigma_0^2\leq\frac{4v_eV_{op}^{'}}{\beta} \left(\beta-2V_{op}^{'}\right)
\end{equation}
\end{proof}
\begin{REM}
When $\sigma_0=0$, both conditions (\ref{stoccond1}) and (\ref{stoccond2}) become $\beta\geq 2 V_{op}^{'}$, which is the stability condition of the original OVM.
\end{REM}
\begin{REM}
The almost sure stability condition also implies the local stability condition. This is true for both determinitic and stochastic OVM.
\end{REM}
\begin{REM}
For the stochastic car-following model (\ref{SOVM2}), the mean square stability condition (i.e. $\mathfrak{R}\{\vect{F}\}+0.5\parallel\vect{G}\parallel^2\leq 0$) also implies the almost sure stability condition since $\mathfrak{R}\{2\vect{F}-\parallel\vect{G}\parallel^2\}\leq 2\mathfrak{R}\{\vect{F}\}+\parallel\vect{G}\parallel^2\leq 0$, which leads to  $\mathfrak{R}\{\vect{F}-0.5\parallel\vect{G}\parallel^2\}\leq 0$. 
 \end{REM}
\section{Model performance}
\label{Sec.Simulation}
The proposed model is numerically simulated  using a standard Euler-–Maruyama scheme. Details of the discretization is given in the Appendix. This section illustrates numerically the effects of the model parameters on the (linear) traffic stability and the performance of the proposed model in replicating real-world traffic data.

\subsection{Stochastic linear stability diagrams}
\begin{figure}[htp]
	\centering
\includegraphics[width=0.75\textwidth]{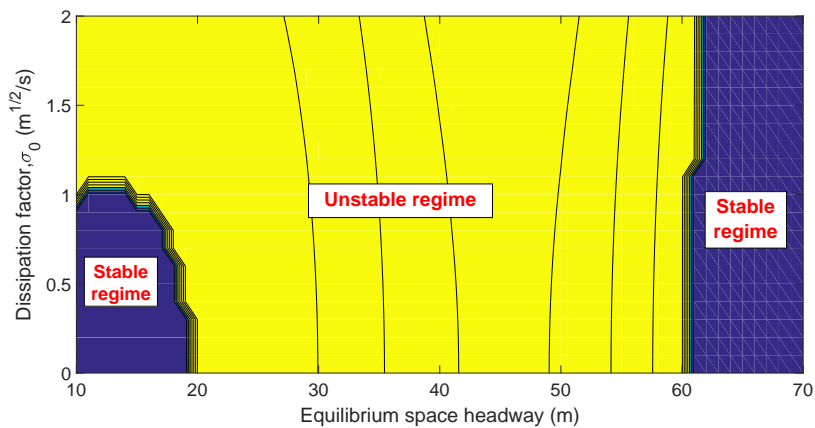}
	\caption{Stochastic almost sure stability diagram for a range of $\sigma_0$}
	\label{fig:sovm:stab:1}
\end{figure}
\begin{figure}[htp]
	\centering
\includegraphics[width=0.75\textwidth]{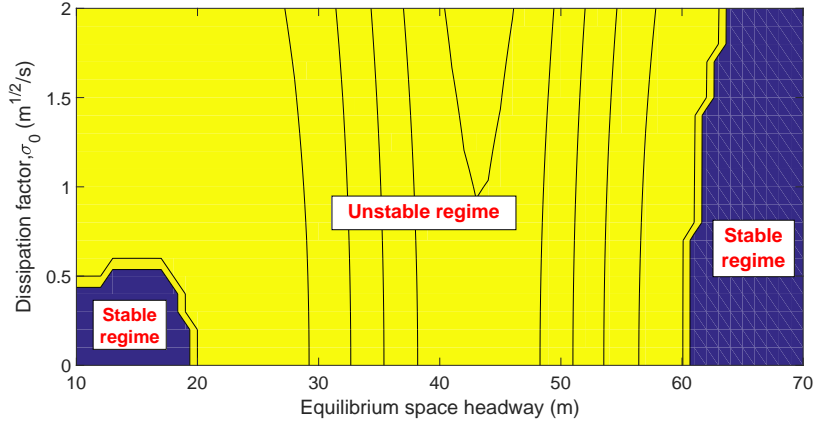}
	\caption{Stochastic mean square stability diagram for a range of $\sigma_0$}
	\label{fig:sovm:stab:2}
\end{figure}

Let us study the stability diagrams for various values of $\sigma_0$ against a range of the initial equilibrium space headway in Figures \ref{fig:sovm:stab:1} and \ref{fig:sovm:stab:2}. In these figures, the following model parameters are used: $\beta =0.5 s^{-1}$, $v_0=25$m/s, $s_c=20$m, and $\alpha=2$.  The unstable regime covers all the values of $\sigma_0$ and $s_e$ so that the stability conditions (\ref{cond1}) and (\ref{cond2}) are violated (i.e. yellow regime), respectively. Whereas, in the stable regime the stability conditions (\ref{cond1}) and (\ref{cond2}) hold (i.e. dark-blue regimes), respectively. It is seen in both figures that when traffic is in the free-flow stable regime (i.e. when the space-headway is relatively large), the stochastic factor has insignificant impact on the stability of traffic flow. However, when traffic is in the congested stable regime (i.e. the space headway is relatively small, $s_e\leq ~20$m), increasing stochastic factor tends to destabilize significantly traffic flow. This has been confirmed from empirical investigations in \citet{Jiang2018}: a small perturbation dies out in the deterministic model (i.e. traffic is stable), whereas the stochastic nature of traffic flow leads to the instability at low speed conditions. For example, in Figure \ref{fig:sovm:stab:1}, when $\sigma_0\geq 1 \sqrt{m}/s$ traffic is unstable for all space headway below 60m. Where as in Figure \ref{fig:sovm:stab:2} when $\sigma_0\geq 0.5\sqrt{m}/s$ traffic is unstable for all space headway below 60m. It is also clear that the stable regimes for almost sure stability condition are larger than those for mean square stability condition (i.e. this conforms to our Remark 3). 

\begin{figure}
    \begin{center}
        \begin{minipage}[b]{0.45\textwidth}
            \centering
            \subfigure[Stable traffic, replicated by OVM]{\label{Fig:stab}
            \includegraphics[width=9cm]{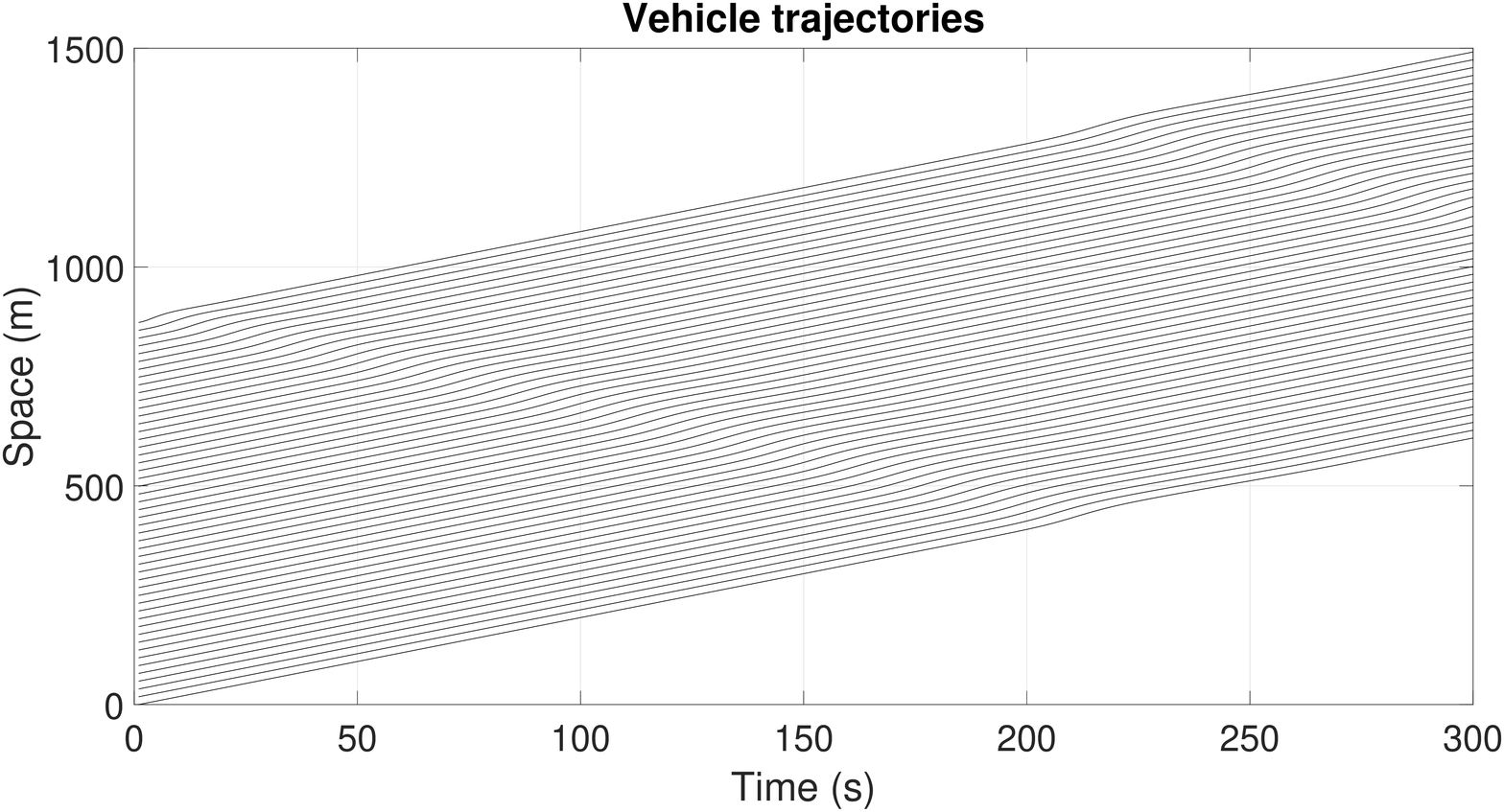}}
        \end{minipage}
        \begin{minipage}[b]{0.45\textwidth}
                \centering
                \subfigure[Unstable traffic, replicated by SOVM, $\sigma_0=1\sqrt{m}/s$]{\label{Fig:unstab}
                \includegraphics[width=9cm]{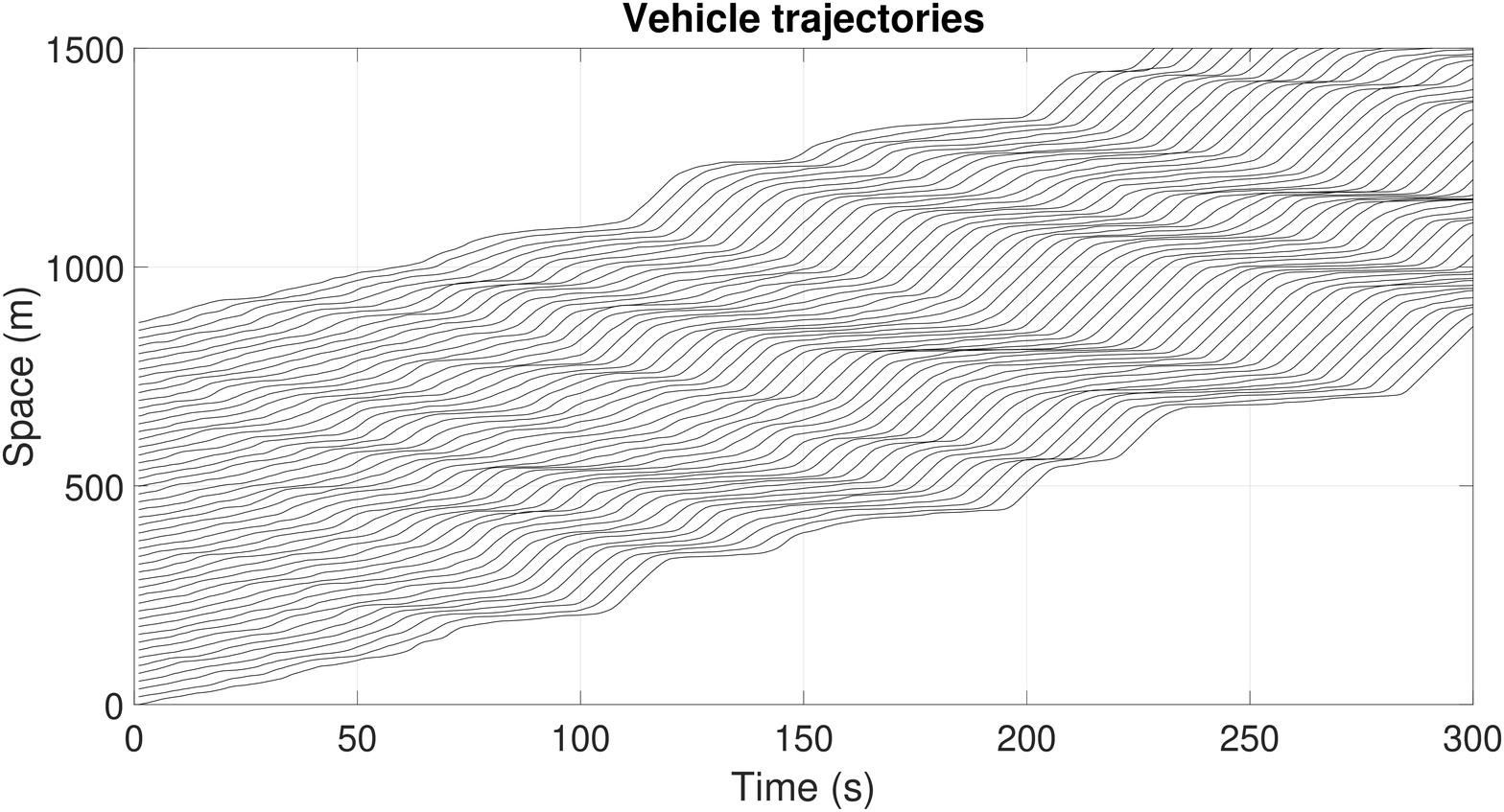}}
         \end{minipage}
    \end{center}
\bf\caption{Trajectories of 50 vehicles under the initial equilibrium space of 18m}\label{Fig:traj}
\end{figure}

Figure \ref{Fig:traj} shows the trajectories of 50 vehicles, replicated by the OVM (Fig. \ref{Fig:stab}) and the SOVM (Fig. \ref{Fig:unstab}) using the above model parameters and the initial equilibrium space headway $s_e=18$m. With these model settings, the stability condition of the OVM is $\beta-2V_{op}^{'}=0.05>0$ which indicates a stable traffic condition at the low equilibrium speed. In contrast, the right-hand side of the stability condition (\ref{cond2})  of the SOVM is $0.1872<\sigma_0^2=1$ indicating an unstable traffic condition at the low equilibrium speed.  


\subsection{Model calibration with real data}
In this section, we briefly describe the calibration results of the proposed model against the real data. More extensive model verification is given in a separate paper \citep{Lee2018}. The model parameters to be calibrated are:
\begin{itemize}
    \item Free-flow speed $v_0$  (m/s)
    \item Reaction coefficient $\beta$  (1/s)
    \item Critical headway $s_c$  (m)
    \item Constant coefficient $\alpha$  (dimensionless)
    \item Dissipation coefficient $\sigma_0$  ($\sqrt{m}/s$ )
\end{itemize}
Note that, similar to the model of \citet{Laval2014}, our proposed model only generates one additional parameter from the original OVM, that is the dissipation coefficient $\sigma_0$.
\begin{figure}[htp]
	\centering
\includegraphics[width=1\textwidth]{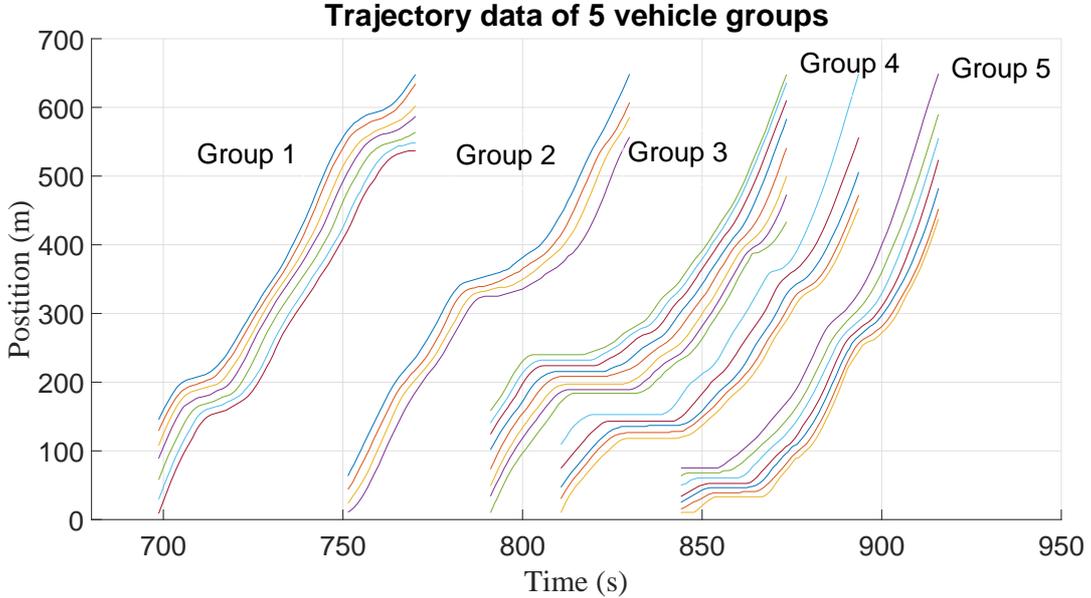}
	\caption{Trajectory data used for the model calibration}
	\label{fig:data}
\end{figure}

To verify the effectiveness of the proposed model, real vehicle trajectory data set collected in a U.S. freeway. The dataset and detailed information were provided by the Federal Highway Administration’s Next Generation Simulation (NGSIM). The trajectory data at different time periods are described in the Figure \ref{fig:data}. We select the trajectory data in the 3rd vehicle group (i.e. at 800s) where there was a substantial traffic congestion at position 200m, for our calibration. To calibrate the model, a standard genetic algorithm is used for our purposes where 100 replications will be used in the stochastic simulation. The mean of the individual vehicle speed is used to compare with the observed individual vehicle speed where the total mean squared errors between the model output and the data is used as a performance index (PI):
\begin{equation}
z=\sum_{j=1}^J\sqrt{\frac{1}{K}\sum_{k=1}^K\left[v_j(k|\vect\pi)-\tilde{v}_j(k)\right]^2}
\end{equation}
where $k\in K$ is time step, $v_j(k|\vect\pi)$ denotes the simulated speed of vehicle $j$ at time step $k$ given the model parameter $\vect\pi=[v_0\hspace{0.1cm} \beta \hspace{0.1cm} s_c\hspace{0.1cm} \alpha \hspace{0.1cm} \sigma_0]$. In contrast, $\tilde{v}_j(k)$ is the measured speed of vehicle $j$ at time step $k$. 



\begin{table}[htp]
\caption{Optimal model parameters obtained from the GA.}
\begin{center}
\begin{tabular}{ |c|c|c| c|c|} 
 \hline
 Free-flow speed & Reaction coefficient & Critical headway & Constant coefficient & Dissipation coefficient \\
 $v_0 $($m/s$) & $\beta$ ($1/s$) & $s_c$ ($m$) & $\alpha$ (dimensionless) & $\sigma_0$ ($\sqrt{m}/s$) \\ 
 \hline
 17.65 & 0.65 & 8.20 & 1.85  & 0.88 \\ 
 \hline
\end{tabular}
\label{tab:opt}
\end{center}
\end{table}
\begin{figure}[htp]
	\centering
\includegraphics[width=.75\textwidth]{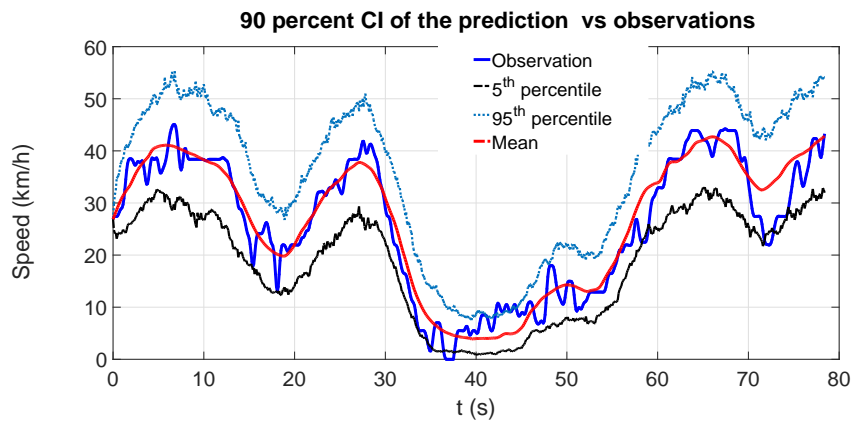}
	\caption{Model prediction vs observation}
	\label{fig:speed}
\end{figure}
\begin{figure}[htp]
	\centering
\includegraphics[width=.75\textwidth]{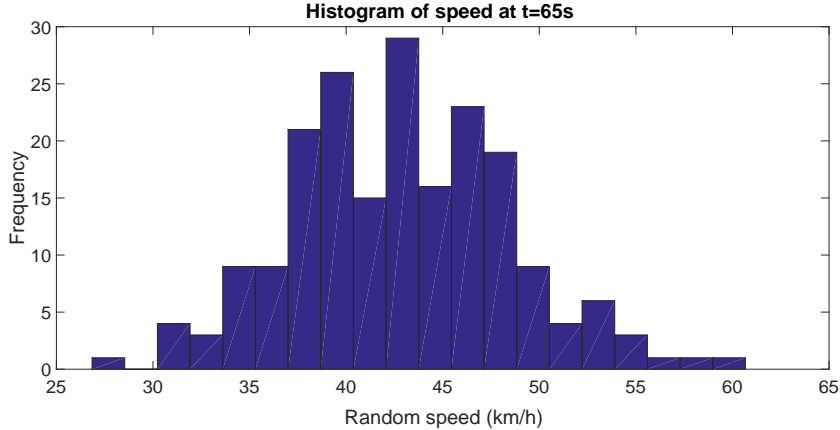}
	\caption{Distribution of the predicted speed at $t=65$s}
	\label{fig:final}
\end{figure}
Table \ref{tab:opt} shows the optimal model parameters obtained after 30 optimization iterations. Figure \ref{fig:speed} shows the predicted speed of the 2nd vehicle against the observed values. In Figure \ref{fig:speed}, the upper and lower bound show the trailing vehicle speed ($90\%$ confidence interval-CI). The solid blue line shows the observed speed while the dashed red line shows the mean of the predicted speed. Figure \ref{fig:speed} illustrates that the observed speed fits well to the range of the stochastic model output at $90\%$ CI and matches well to the mean of the model output. It is seen that the observed speed is always within the upper and lower bound of the predicted speed (at 90\% confidence interval). Moreover,  the variance of the predicted stochastic speed is relatively low when the speed is low (e.g. at time $t=40$s). A wide range of the predicted speeds at a certain time instant are described in Figure \ref{fig:final}, which conforms to a Gaussian distribution with the mean value being identified from Figure  \ref{fig:speed} (i.e. approximately 40\textit{km/h}).


\section{Concluding remarks}
\label{Sec.Conclusion}
The mathematical framework of stochastic car-following models developed in this study can deal with uncertain human perception. This is achieved by integrating the OVM in the stochastic equation. The proposed model is a first attempt to allow us to understand analytically how human errors can be responsible for traffic instabilities where the deterministic part is stable. This is achieved by relaxing the assumptions of constant dissipation parameter and constant optimal speed in the stochastic acceleration of \citet{Laval2014}. Moreover, the formulation of the proposed model follows an extended CIR stochastic process which consequently enhances non-negative speed values for arbitrary model parameters.  The model calibration results show good consistency with the trajectory data collected in a US freeway (i.e. NGSIM data). Our concurrent research is to extend the proposed method for multi-lane traffic dynamics, in which the lane-changing probability is continuously estimated using a deep learning method \citep{Lee2018}.

Finally, we would like to mention that Connected and Autonomous vehicles (CAVs) have been verified
to significantly improve traffic efficiency. However, there is still a long lifespan for heterogeneous traffic flow consisting of both human-driven vehicles and CAVs. Thus a deep understanding of the mixed traffic dynamics including both CAVs and human-driven vehicles is critical to the traffic stability issues for the deployment of CAVs in the near future. Many efforts have been made to study the impact of CAVs on traffic flow instability such as  \citet{Ngoduy2013a,Ngoduy2013,Talebpour2016,Wang2017}. Recently, \citet{Jia2019}  have proposed a novel model to consider the behavior of the CAVs in a heterogeneous platoon in detail but ignored the stochasticity of the drivers in the human-driven vehicles. In our future work, by introducing the proposed SOVM to model the stochastic human behavior, we hope to extend our approach to study the impact of CAVs on traffic instability under more realistic traffic flow situations.


\appendix
\section{Discrete-time stochastic model}

It is straightforward to formulate our model in the following form for numerical simulation.


\begin{equation}
d\vect{u}=\vect{A}\left(\vect{u},t\right)dt+\vect{B}\left(\vect{u},t\right)d\vect{W}(t)
\end{equation}
where
\begin{equation} \label{eq:A}
\vect{u}(t) =
\kbordermatrix{
	&\\
	& x_1(t)  \\
	& \vdots   \\
	& x_N(t)  \\
	& v_1(t)  \\
	& \vdots  \\
	& v_N(t) \\
},\hspace{0.5cm}
\vect{A}\left(\vect{u},t\right) =
\kbordermatrix{
	&\\
	& v_1(t)  \\
	& \vdots   \\
	& v_N(t)  \\
	& \beta\left(V_{op}(s_1(t))-v_1(t)\right)  \\
	& \vdots  \\
	& \beta\left(V_{op}(s_N(t))-v_N(t)\right) \\
}
\end{equation}

and
\begin{equation} 
\vect{B}\left(\vect{u},t\right) =
\kbordermatrix{
	& &\\
	& \vect{O}& \vect{O}  \\
	& \vect{O}& \vect{C}\left(\vect{u},t\right)  \\
}
,\hspace{0.5cm}
\vect{W}(t) =
\kbordermatrix{
	&\\
	& 0  \\
	& \vdots   \\
	& 0  \\
	& W_1(t)  \\
	& \vdots  \\
	& W_N(t) \\
}
\end{equation}
with
\begin{equation} 
\vect{C}\left(\vect{u},t\right) =
\kbordermatrix{
	& & & &\\
	& \sigma_0\sqrt{v_1(t)}& 0&\hdots & 0  \\
	& 0& \sigma_0\sqrt{v_2(t)} &\hdots & 0  \\
	& \vdots&  \vdots &\vdots &\vdots\\
	& 0& 0 &\hdots & \sigma_0\sqrt{v_N(t)}  \\
}\hspace{0.5cm}\text{for our proposed model}
\end{equation}
or
\begin{equation} 
\vect{C}\left(\vect{u},t\right) =
\kbordermatrix{
	& & & &\\
	& \sigma_0& 0&\hdots & 0  \\
	& 0& \sigma_0 &\hdots & 0  \\
	& \vdots&  \vdots &\vdots &\vdots\\
	& 0& 0 &\hdots & \sigma_0  \\
}\hspace{0.5cm}\text{for Laval et al.-type model}
\end{equation}

%
The models are simulated by using an explicit Euler-–Maruyama scheme. The discretization of the SOVM is
\begin{equation}
\vect{u}(k+1)-\vect{u}(k)=\vect{A}\left(\vect{u}(k)\right)\Delta t +\vect{B}\left(\vect{u}(k)\right)\sqrt{\Delta t}\vect{w}(k+1)
\end{equation}
\begin{equation} 
\vect{w}(k) =
\kbordermatrix{
	&\\
	& 0  \\
	& \vdots   \\
	& 0  \\
	& w_1(k)  \\
	& \vdots  \\
	& w_N(k) \\
},\hspace{0.5cm} w_n(k) \sim \mathcal{N}(0,1)
\end{equation}

If we change $\sigma_0$ by $\delta$, the change in the speed in a time step is:
\begin{eqnarray}
\Delta v_n(k+1)&=&\delta\sqrt{\Delta t}w_n(k+1),\hspace{0.2cm}\text{for model of \citet{Laval2014}}\\
\Delta v_n(k+1)&=&\delta\sqrt{v_n(k)}\sqrt{\Delta t}w_n(k+1),\hspace{0.2cm}\text{for our model}
\end{eqnarray}
It shows that the effect of a change in $\sigma_0$ in our model is insignificant when speed is low. It is also worth mentioning that the change is greatly affected by the random term.





\end{document}